\documentclass[pdflatex,sn-mathphys-num]{sn-jnl}


\usepackage{graphicx}%
\usepackage{multirow}%
\usepackage{amsmath,amssymb,amsfonts}%
\usepackage{amsthm}%
\usepackage{mathrsfs}%
\usepackage[title]{appendix}%
\usepackage{xcolor}%
\usepackage{textcomp}%
\usepackage{manyfoot}%
\usepackage{booktabs}%
\usepackage{algorithm}%
\usepackage{algorithmicx}%
\usepackage{algpseudocode}%
\usepackage{listings}%
\usepackage{enumerate}
\usepackage{epstopdf}
\usepackage{epsfig}
\usepackage{svg}
\usepackage{diagbox}
\usepackage{fancyhdr}

\addtocounter{MaxMatrixCols}{10}


\theoremstyle{thmstyleone}%
\newtheorem{theorem}{Theorem}
%

\theoremstyle{thmstyletwo}%
\newtheorem{example}{Example}%
\newtheorem{remark}{Remark}%

\theoremstyle{thmstylethree}%
\newtheorem{definition}{Definition}%

\theoremstyle{thmstyleone}%
\newtheorem{corollary}{Corollary}%

\raggedbottom

\begin{document}

\title[Article Title]{An Automated Theorem Generator with Theoretical Foundation Based on Rectangular Standard Contradiction}


\author[1,4]{\fnm{Yang} \sur{Xu}}\email{xuyang@swjtu.edu.cn}
\equalcont{All the authors are co-first authors.}

\author[2,4]{\fnm{Peiyao} \sur{Liu}}\email{liupeiyao@mail.xhu.edu.cn}
\equalcont{All the authors are co-first authors.}

\author[1,4]{\fnm{Shuwei} \sur{Chen}}\email{swchen@swjtu.edu.cn}
\equalcont{All the authors are co-first authors.}

\author[3,4]{\fnm{Jun} \sur{Liu}}\email{j.liu@ulster.ac.uk}
\equalcont{All the authors are co-first authors.}

\affil[1]{\orgdiv{School of Mathematics}, \orgname{Southwest Jiaotong University}, \orgaddress{\city{Chengdu}, \postcode{611756}, \state{Sichuan}, \country{China}}}

\affil[2]{\orgdiv{School of Computer and Software Engineering}, \orgname{Xihua University}, \orgaddress{\city{Chengdu}, \postcode{610039}, \state{Sichuan}, \country{China}}}

\affil[3]{\orgdiv{School of Computing}, \orgname{Ulster University}, \orgaddress{\city{Belfast}, \postcode{BT15 1ED}, \state{Northern Ireland}, \country{UK}}}

\affil[4]{\orgdiv{National-Local Joint Engineering Laboratory of System Credibility Automatic Verification}, \orgaddress{\city{Chengdu}, \postcode{611756}, \state{Sichuan}, \country{China}}}


\abstract{Currently, there is a lack of rigorous theoretical system for systematically generating non-trivial and logically valid theorems. Addressing this critical gap, this paper conducts research to propose a novel automated theorem generation theory and tool. Based on the concept of standard contradiction which possesses unique deductive advantages, this paper defines and proves, for the first time, a new logical structure known as rectangular standard contradiction. Centered on this structure, a complete Automated Theorem Generation (ATG) theory is put forward. Theoretical proofs clarify two core properties of rectangular standard contradiction: first, it is a standard contradiction (necessarily unsatisfiable); second, it exhibits non-redundancy (the remaining clause set becomes satisfiable after removing any clause). Leveraging these properties, this paper proves that partitioning a rectangular standard contradiction into a premise subset $A$ and negation of its complement $H$, a valid theorem $A \vdash \neg H$ can be formed, and all such theorems are logically equivalent. To implement this theory, an efficient template-based ATG algorithm is designed, and a Rectangular Automated Theorem Generator is developed. This research enables machines to transition from "verifiers" to "discoverers", opening up new avenues for fundamental research in the fields of logic and artificial intelligence.}

\keywords{Automated Theorem Generation; Standard Contradiction; Rectangular Standard Contradiction}



\maketitle

\section{Introduction}\label{sec1}

As a core branch of artificial intelligence and computer science, automated reasoning is dedicated to enabling computers to simulate human logical reasoning capabilities \cite{russell2021artificial}. However, for a long time, the research paradigm in the field of automated reasoning has mainly focused on the ``proving'' phase, i.e., verifying the validity of an existing theorem \cite{pantsar2024theorem, coward2022formal, stock2024application, hozzova2023program}. Regarding how to systematically and automatically "generate" or ``discover'' new, non-trivial, and logically necessary theorems starting from a set of basic literals, there is still a lack of a complete and rigorous theoretical system. This role transition from ``verifier'' to a ``discoverer'' is of crucial significance for enhancing the creativity of machine intelligence \cite{delgrande2024current, lin2024atg}.

To fill this theoretical gap, this paper proposes a complete theory of Automated Theorem Generation (ATG). The cornerstone of this theory is a novel logical structure - ``Rectangular Standard Contradiction'' - which we have proposed and rigorously defined for the first time. The theory presented in this paper is built upon the concept of Standard Contradiction\cite{xu2025contradictions}. A standard contradiction refers to an unsatisfiable clause set, whose structural characteristics offer unique advantages for logical deduction \cite{xu2018distinctive, xu2018contradiction}. In recent years, research on Standard Contradiction has achieved significant progress, leading to the formation series of systematic theories and algorithms \cite{xu2025extended, xu2025dynamic}. 

The core contribution of this paper lies in the establishment of a complete theory capable of directly generating correct theorems. Since the correctness of the theorem is guaranteed by their construction method and the inherent logical properties of Rectangular Standard Contradiction - with rigorous mathematical proofs provided in this paper - theorems generated through this method require no further verification by any theorem prover or manual effort.

First, this paper rigorously defines the ``Rectangular Standard Contradiction'' and conducts in-depth theoretical proofs regarding its properties. Theorem \ref{theorem1} proves that an $n$-level Rectangular Standard Contradiction constructed from $n$ generation literals is a standard contradiction, i.e., a necessarily unsatisfiable clause set. More importantly, Theorem \ref{theorem2} and its corollaries reveal its key ``non-redundant'' property: if any one or more clauses are removed from a complete Rectangular Standard Contradiction, the remaining clause set becomes satisfiable. This profound property directly forms the logical basis for theorem generation, as elaborated in Theorem \ref{theorem3}: taking any subset of the Rectangular Standard Contradiction as the premise ($A$), the negation of its complement ($\neg H$) constitutes a logically valid theorem ($A \vdash \neg H$). Theorem \ref{theorem4} further proves that all theorems generated by a set of generation literals are logically equivalent.

On the solid theoretical foundation, this paper further designs and implements an efficient Automated Theorem Generation (ATG) algorithm (Algorithm \ref{algo2}). By adopting an innovative ``template-based construction'' method (Algorithm \ref{algo1}), this algorithm avoids the complexity of naive construction method and reliance on the literal management. It can quickly construct the corresponding Rectangular Standard Contradiction from any given and qualified literal set, and automatically output a new theorem consisting of premises and a conclusion. Currently, we have released the first version of the automated theorem generator based on rectangular standard contradiction on GitHub (\url{https://github.com/lpy-2019/Automated-Theorem-Generator---Rectangle}), which readers can download and use.

The fundamental theoretical significance of this study lies in its that it provides the theoretical basis for computers to automatically ``generating'' or ``discovering'' paradigm. By providing a systematic method for constructing unsatisfiable sets and deriving theorems from them, this work opens up new perspectives for basic theoretical research in logic and artificial intelligence \cite{liu2025neural, ferrag2025llm}, and explores the possibility of machines performing logical creation \cite{abdelaziz2022learning}.

The structure of this paper is organized as follows: Section \ref{sec2} reviews preliminary knowledge such as first-order/propositional logic and the concept of standard contradiction. Section \ref{sec3} elaborates on the definition and properties of Rectangular Standard Contradiction, and presents a rigorous mathematical proof of its status as a standard contradiction. Section \ref{sec4} expounds the complete theory of theorem generation based on the properties of Rectangular Standard Contradiction. Section \ref{sec5} provides the implementation of an automated algorithm for theorem generation, from template construction to final theorem output. Finally, Section \ref{sec6} summarizes the entire paper and outlines future research directions.

\section{Preliminaries}\label{sec2}

This paper involves concepts from both first-order logic and propositional logic. This section briefly reviews the fundamental concepts of these two logical systems \cite{liu2023efficient, gleiss2020subsumption} and introduces the key notion of the standard contradiction, which forms the theoretical basis of our work.

\begin{definition}[Term, First-Order Logic]
	A \textbf{term} is a syntactic structure in first-order logic that denotes individual objects, defined recursively by the follows rules:
	
	\begin{itemize}
		\item Individual constants (e.g., $a, b, c$ and variables (e.g., $x, y, z$) are terms.
		\item If $f$ is an $n$-ary function symbol and $t_1, t_2, \cdots, t_n$ are terms, then $f(t_1, t_2, \cdots, t_n)$ is also a term.
	\end{itemize}
\end{definition}

Terms are used to construct atomic formulas, describing the properties or relationship of individual objects.

\begin{definition}[Literal]
	A \textbf{literal} is either an atomic formula (\textbf{Atom}) or its negation. In propositional logic, an atom is a propositional variable (e.g., $p, x_11$). In first-order logic, an atom is a predicate applied to terms (e.g., $P(x)$)
	
	\begin{itemize}
		\item \textbf{Positive literal}: The atomic formula itself (e.g., $P(x)$, $p$).
		\item \textbf{Negative literal}: The negation of an atomic formula (e.g., $\neg Q(a, b)$, $\neg p$, ).
	\end{itemize}
\end{definition}

Literals are the basic units for constructing clauses, used to express true of false assertions of propositions.

\begin{definition}[Clause]
	A \textbf{clause} is a disjunction (logical OR, $\vee$) of a finite set of literals.
	
	\begin{itemize}
		\item \textbf{Empty clause}: A clause containing no literals, denoted as $\square$.
	\end{itemize}
\end{definition}

Clauses are the core structure of the automated deduction rule.

\begin{definition}[CNF Formula]
	A \textbf{Conjunctive Normal Form} (in short, CNF) formula is a conjunction (logical AND, $\wedge$) of a finite set of clause, i.e., it has the form $C_1, C_2, \cdots, C_n$, where $C_i$ is a clause.
\end{definition}

Next is the definitions of the standard contradiction.

\begin{definition}[Standard Contradiction]\cite{xu2018contradiction}
	Suppose a clause set $\mathcal{S}=\{C_1, C_2, \cdots, C_m\}$ in propositional or first-order logic. If $\forall(l_{1},\cdots,l_{m}) \in \prod_{i=1}^{m} C_{i}$, there exits at least one complementary pair among ${l_1, l_2, \cdots, l_m}$, then $\mathcal{S} = \bigwedge_{i=1}^{m} C_{i}$ is called a \textbf{standard contradiction}.
\end{definition}

For a more detailed introduction to standard contradiction, please refer to reference \cite{xu2025contradictions}.

\section{The Rectangular Standard Contradiction}\label{sec3}

This section introduces and elaborates on the core of our theory - Rectangular Standard Contradiction. This novel logical construction is not only necessarily unsatisfiable, but more importantly, it possesses a unique ``non-redundant'' property, laying a solid logical foundation for the subsequent theorem generation theory.

\subsection{Concept}\label{subsec1}

We first start from a more general concept of ``maximal contradiction'' and then introduce ``rectangular standard contradiction'' as its special case.

\begin{definition}\label{def1}(Maximal Contradiction)
	Suppose a literal set $\mathcal{L}=\{l_1, l_2, \cdots, l_n\}$ in propositional or first-order logic,where $l_i$ is either an atom $p_i$ or a negated atom $\neg p_i$. A clause $C(l_1, l_2, \cdots, l_n) = \vee_{i=1}^{n}l_i$, a clause set $\mathcal{S} = \{C(l_1, l_2, \cdots, l_n) | l_i \in \{p_i, \neg p_i\}, i = 1, 2, \cdots, n\}$. Then $\mathcal{S}$ is a standard contradiction containing $n \times 2^n$ literals that includes $\mathcal{L}$, and $\mathcal{S}$ is called the \textbf{maximal contradiction} generated by $\mathcal{L}$. 
\end{definition}

Maximal contradiction provides us with a complete unsatisfiable set that covers all possibilities. However, its structure is overly broad. To obtain more refined properties, we introduce a highly structured and symmetric special case - rectangular standard contradiction.

According to Definition \ref{def1}, the length of each clause in the maximal contradiction $\mathcal{S}$ generated by $\mathcal{L}$ is $|L|$ (that is, the number of literals in $\mathcal{L}$), and there are a total of $2^n$ clauses in $\mathcal{S}$. If each of these $2^n$ clauses is arranged vertically is sequence, the form of $\mathcal{S}$ is a rectangle (with a length of $2^n$ literals and a height of $n$ literals). 

\begin{definition}\label{def2}[Rectangular Standard Contradiction]
	Suppose a literal set $\mathcal{L}=\{l_1, l_2, \cdots, l_n\}$ in propositional or first-order logic, and $\mathcal{R}_\mathcal{L}^n$ is the maximal contradiction generated by $\mathcal{L}$. When $\mathcal{R}_\mathcal{L}^n$ satisfies the following recursive rules, $\mathcal{R}_\mathcal{L}^n$ (here ``$n$'' means the number of literals in $\mathcal{L}$) is a \textbf{rectangular standard contradiction}.
	
	\begin{enumerate}[i)]
		\item Let $\mathcal{L}^\prime=\mathcal{L}\setminus\{l_n\}$, and $\mathcal{R}_{\mathcal{L}^\prime}^{n-1}$ is the rectangular standard contradiction generated by $\mathcal{L}^\prime$;
		\item The rectangular form of $\mathcal{R}_\mathcal{L}^n$ is as follows:
	\end{enumerate}
	
	\begin{equation}\label{equation1}
		\begin{array}{c}
			\begin{bmatrix}
				\hline
				\multicolumn{3}{|c|}{\mathcal{L}^\prime} & \multicolumn{3}{|c|}{\mathcal{L}^\prime} \\
				\hline
				l_n & \cdots & l_n & \neg l_n & \cdots & \neg l_n \\
			\end{bmatrix} \\
			\begin{array}{@{}c@{\hspace{1em}}c@{}}
				\underbrace{\hspace{4em}}_{2^{n-1}} & \underbrace{\hspace{4em}}_{2^{n-1}}
			\end{array}
		\end{array}		
	\end{equation}
	
	The literal set $\mathcal{L}$ is called the \textbf{generation literal set} of the rectangular standard contradiction $\mathcal{R}_\mathcal{L}^n$. The rectangular standard contradiction $\mathcal{R}_\mathcal{L}^n$ is also called \textbf{\emph{n}-level rectangular standard contradiction}.
\end{definition}

The recursive definition of Rectangular Standard Contradiction is the origin of all its excellent properties. It expands an ($n$-1)-level rectangular standard contradiction into an $n$-level structure in a systematic manner, ensuring a high degree of symmetry and analyzability of the whole.

\begin{remark}
	The generation literal set $\mathcal{L}$ must satisfy the rule that no identical predicate symbols exist. In particular, the equality symbol (``='') is a special type of predicate symbol.
\end{remark}

This rule is proposed to avoid the occurrence of complementary pairs in the clauses within a rectangular standard contradiction after substitution and to ensure that there are no redundant clauses in the rectangular standard contradiction $\mathcal{R}_\mathcal{L}^n$ generated by $\mathcal{L}$.

Each column of $\mathcal{R}_\mathcal{L}^n$ is a clause, and usually, the first column of $\mathcal{R}_\mathcal{L}^n$ is a clause formed by the generation literal set $\mathcal{L}$.

From Definition \ref{def2}, it can be known that $Length(\mathcal{L}^\prime) = 2^{n-1}$ (literals) and $Height(\mathcal{L}^\prime) = n - 1$ (literals). Combined Formula (\ref{equation1}), it can be directly derived that $Length(\mathcal{L}) = 2 \times Length(\mathcal{L}^\prime)$ and $Height(\mathcal{L}) = Height(\mathcal{L}^\prime) + 1$, where ``1'' means the last row $(l_n, \cdots, l_n, \neg l_n, \cdots, \neg l_n)$ of $\mathcal{R}_\mathcal{L}^n$.

Notably, in the last row $(l_n, \cdots, l_n, \neg l_n, \cdots, \neg l_n)$, the first $2^{n-1}$ literals are $l_n$, and the last $2^{n-1}$ literals are $\neg l_n$.

Next, we give two examples to illustrate the rectangular standard contradiction in propositional logic (Example \ref{example1}) and first-order logic (Example \ref{example2}).

\begin{example}\label{example1}
	Let $\mathcal{L} = \{w, x, y, z\}$ is a generation literal set in propositional logic. The 4-level rectangular standard contradiction $\mathcal{R}_\mathcal{L}^{4}$ generated by $\mathcal{L}$ is shown as follows. ($Height(\mathcal{R}_\mathcal{L}^{4}) = 4$, and $Length(\mathcal{R}_\mathcal{L}^{4}) = 2^4 = 16$)
	
	\begin{equation}\label{equation2}
		\begin{bmatrix}
		w & \neg w & w & \neg w & w & \neg w & w & \neg w & w & \neg w & w & \neg w & w & \neg w & w & \neg w \\
		x & x & \neg x & \neg x & x & x & \neg x & \neg x & x & x & \neg x & \neg x & x & x & \neg x & \neg x \\
		y & y & y & y & \neg y & \neg y & \neg y & \neg y & y & y & y & y & \neg y & \neg y & \neg y & \neg y \\
		z & z & z & z & z & z & z & z & \neg z & \neg z & \neg z & \neg z & \neg z & \neg z & \neg z & \neg z
		\end{bmatrix} 	
	\end{equation}
\end{example}

\begin{example}\label{example2}
	Let $\mathcal{L} = \{P_1(a), P_2(f(x)), P_3(g(y, a))\}$ is a generation literal set in first-order logic. The rectangular standard contradiction $\mathcal{R}_\mathcal{L}^{3}$ generated by $\mathcal{L}$ is shown as follows. ($Height(\mathcal{R}_\mathcal{L}^{3}) = 3$, and $Length(\mathcal{R}_\mathcal{L}^{3}) = 2^3 = 8$)
	
	\begin{equation}\label{equation3}
		\begin{bmatrix}
		P_1(a) & \neg P_1(a) & P_1(a) & \neg P_1(a) & P_1(a) & \neg P_1(a) & P_1(a) & \neg P_1(a) \\
		P_2(f(x)) & P_2(f(x)) & \neg P_2(f(x)) & \neg P_2(f(x)) & P_2(f(x)) & P_2(f(x)) & \neg P_2(f(x)) & \neg P_2(f(x)) \\
		P_3(g(y, a)) & P_3(g(y, a)) & P_3(g(y, a)) & P_3(g(y, a)) & \neg P_3(g(y, a)) & \neg P_3(g(y, a)) & \neg P_3(g(y, a)) & \neg P_3(g(y, a))
		\end{bmatrix} 	
	\end{equation}
\end{example}

\subsection{Naive Construction}\label{subsec2}

To gain a more intuitive understanding of the recursive structure in Definition \ref{def2}, we propose a step-by-step expansion construction method, namely the \textbf{naive construction method}. Starting from a single literal, this method iterates level by level until a complete $n$-level rectangular standard contradiction $\mathcal{R}_\mathcal{L}^{n}$ is constructed. This is not only a construction process, but more importantly, a procedural interpretation of the recursive definition. The steps of the naive method are follows.

Let $\mathcal{L} = \{x_1, x_2, \cdots, x_n\}$ is a generation literal set. For the convenience of illustration, the following sets are defined:

$\begin{array}{ll}
	\mathcal{L}_1 = \{x_1\} & \text{The element is the first literal in } \mathcal{L}. \\
	\mathcal{L}_2 = \{x_1, x_2\} & \text{The elements are the first 2 literals in } \mathcal{L}. \\
	\vdots \\
	\mathcal{L}_i = \{x_1, x_2, \cdots, x_i\} & \text{The elements are the first \emph{i} literals in } \mathcal{L}. \\
	\vdots \\
	\mathcal{L}_{n-1} = \{x_1, x_2, \cdots, x_{n-1}\} & \text{The elements are the first \emph{n} - 1 literals in } \mathcal{L}. \\
	\mathcal{L}_n = \{x_1, x_2, \cdots, x_n\} = \mathcal{L}. & \text{The elements are the first \emph{n} literals in } \mathcal{L} \text{, i.e, } \mathcal{L} \text{ itself}.
\end{array}$

\textbf{Step 1}: Using the $\mathcal{L}_1$ to construct a 1-level rectangular standard contradiction $\mathcal{R}_{\mathcal{L}_1}^1$ which contains $2^1$ clauses, then

\begin{equation}
	\mathcal{R}_{\mathcal{L}_1}^1 = 
	\begin{bmatrix}
		x_1 & \neg x_1
	\end{bmatrix} 	
\end{equation}

\textbf{Step 2}: Using the $\mathcal{L}_2$ and $\mathcal{R}_{\mathcal{L}_1}^1$ (from Step 1) to construct 2-level rectangular standard contradiction $\mathcal{R}_{\mathcal{L}_2}^2$ which contains $2^2$ clauses, then

\begin{equation}
	\mathcal{R}_{\mathcal{L}_2}^2 = 
	\begin{bmatrix}
		\hline
		\multicolumn{2}{|c|}{\mathcal{R}_{\mathcal{L}_1}^1} & \multicolumn{2}{|c|}{\mathcal{R}_{\mathcal{L}_1}^1} \\
		\hline
		x_2 & x_2 & \neg x_2 & \neg x_2
	\end{bmatrix}
	=
	\begin{bmatrix}
		x_1 & \neg x_1 & x_1 & \neg x_1 \\
		x_2 & x_2 & \neg x_2 & \neg x_2
	\end{bmatrix} 	
\end{equation}

\textbf{Step 3}: Using the $\mathcal{L}_3$  and $\mathcal{R}_{\mathcal{L}_2}^2$ (from Step 2) to construct 3-level rectangular standard contradiction $\mathcal{R}_{\mathcal{L}_3}^3$ which contains $2^3$ clauses, then

\begin{equation}
	\mathcal{R}_{\mathcal{L}_3}^3 = 
	\begin{bmatrix}
		\hline
		\multicolumn{4}{|c|}{\mathcal{R}_{\mathcal{L}_2}^2} & \multicolumn{4}{|c|}{\mathcal{R}_{\mathcal{L}_2}^2} \\
		\hline
		x_3 & x_3 & x_3 & x_3 & \neg x_3 & \neg x_3 & \neg x_3 & \neg x_3
	\end{bmatrix}
	=
	\begin{bmatrix}
		x_1 & \neg x_1 & x_1 & \neg x_1 & x_1 & \neg x_1 & x_1 & \neg x_1 \\
		x_2 & x_2 & \neg x_2 & \neg x_2 & x_2 & x_2 & \neg x_2 & \neg x_2 \\
		x_3 & x_3 & x_3 & x_3 & \neg x_3 & \neg x_3 & \neg x_3 & \neg x_3
	\end{bmatrix} 	
\end{equation}

$\cdots$

\textbf{Step \emph{i}}: Using the $\mathcal{L}_{i-1}$ and $\mathcal{R}_{\mathcal{L}_{i-1}}^{i-1}$ (from Step \emph{i}-1) to construct $i$-level rectangular standard contradiction $\mathcal{R}_{\mathcal{L}_i}^i$ which contains $2^{i-1}$ clauses, then

\begin{equation}
	\mathcal{R}_{\mathcal{L}_i}^i = 
	\begin{array}{c}
		\begin{bmatrix}
			\hline
			\multicolumn{4}{|c|}{\mathcal{R}_{\mathcal{L}_{i-1}}^{i-1}} & \multicolumn{4}{|c|}{\mathcal{R}_{\mathcal{L}_{i-1}}^{i-1}} \\
			\hline
			x_i & \cdots & x_i & \neg x_i & \cdots & \neg x_i \\
		\end{bmatrix} \\
		\begin{array}{@{}c@{\hspace{1em}}c@{}}
			\underbrace{\hspace{4em}}_{2^{i-1}} & \underbrace{\hspace{4em}}_{2^{i-1}}
		\end{array}
	\end{array}	
\end{equation}

$\cdots$

\textbf{Step \emph{n}}: Using the $\mathcal{L}$ and $\mathcal{R}_{\mathcal{L}_{n-1}}^{n-1}$ (from Step \emph{n}-1) to construct $n$-level rectangular standard contradiction $\mathcal{R}_{\mathcal{L}}^n$ which contains $2^n$ clauses, then

\begin{equation}
	\mathcal{R}_{\mathcal{L}}^n = 
	\begin{array}{c}
		\begin{bmatrix}
			\hline
			\multicolumn{4}{|c|}{\mathcal{R}_{\mathcal{L}_{n-1}}^{n-1}} & \multicolumn{4}{|c|}{\mathcal{R}_{\mathcal{L}_{n-1}}^{n-1}} \\
			\hline
			x_n & \cdots & x_n & \neg x_n & \cdots & \neg x_n \\
		\end{bmatrix} \\
		\begin{array}{@{}c@{\hspace{1em}}c@{}}
			\underbrace{\hspace{4em}}_{2^{n-1}} & \underbrace{\hspace{4em}}_{2^{n-1}}
		\end{array}
	\end{array}	
\end{equation}

\subsection{Proof}\label{subsec3}

To establish Rectangular Standard Contradiction as the logical cornerstone of our theory, it is first necessary to rigorously prove its core property: it is a standard contradiction, i.e., a necessarily unsatisfiable clause set. The following theorem completes this critical proof using mathematical induction.

\begin{theorem}\label{theorem1}
	Let $\mathcal{L}$ be a generation literal set containing \emph{n} literals, then the rectangular standard contradiction $\mathcal{R}_\mathcal{L}^n$ generated by $\mathcal{L}$ is a standard contradiction.
\end{theorem}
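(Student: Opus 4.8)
The plan is to argue by mathematical induction on $n = |\mathcal{L}|$, mirroring the recursive structure of Definition \ref{def2}. For the base case $n = 1$, the set $\mathcal{R}_{\mathcal{L}}^1$ consists of the two unit clauses $l_1$ and $\neg l_1$; the unique tuple in the product of these two clauses is $(l_1, \neg l_1)$, which is a complementary pair, so $\mathcal{R}_{\mathcal{L}}^1$ is a standard contradiction.

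For the inductive step I would assume the claim for generation literal sets of size $n - 1$, put $\mathcal{L}' = \mathcal{L} \setminus \{l_n\}$, and write $\mathcal{R}_{\mathcal{L}'}^{n-1} = \{D_1, \dots, D_{2^{n-1}}\}$, a standard contradiction by the induction hypothesis. From the rectangular form (\ref{equation1}), equivalently from the naive construction, the $2^n$ clauses of $\mathcal{R}_{\mathcal{L}}^n$ are exactly $D_j \vee l_n$ for $j = 1, \dots, 2^{n-1}$ (the left block) together with $D_j \vee \neg l_n$ for $j = 1, \dots, 2^{n-1}$ (the right block); by the Remark, $l_n$ and $\neg l_n$ occur in none of the $D_j$, so these are genuine single-literal extensions. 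Then I would fix an arbitrary selection picking $m_j \in D_j \vee l_n$ from each left-block clause and $m'_j \in D_j \vee \neg l_n$ from each right-block clause, and split into three cases: (a) some left-block choice equals $l_n$ and some right-block choice equals $\neg l_n$, which immediately yields the complementary pair $(l_n, \neg l_n)$; (b) no left-block choice equals $l_n$, so $(m_1, \dots, m_{2^{n-1}})$ is a selection from $\prod_{j} D_j$ and the induction hypothesis furnishes a complementary pair; (c) no right-block choice equals $\neg l_n$, handled symmetrically via the induction hypothesis applied to $(m'_1, \dots, m'_{2^{n-1}})$. Since the negation of (a) is precisely ``(no left choice is $l_n$) or (no right choice is $\neg l_n$)'', cases (a)--(c) are exhaustive, so every selection from $\mathcal{R}_{\mathcal{L}}^n$ contains a complementary pair; that is, $\mathcal{R}_{\mathcal{L}}^n$ is a standard contradiction.

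The step I expect to need the most care is the exhaustiveness of this case split. A selection is free to draw $l_n$ from some left-block clauses while drawing ordinary $D_j$-literals from the rest, so one cannot reason by ``the selection either uses the last row or it does not''. The decisive dichotomy is instead whether $l_n$ appears \emph{anywhere} among the left-block choices and whether $\neg l_n$ appears \emph{anywhere} among the right-block choices: only when both appear is the complementary pair immediate, and otherwise one entire block reduces to a selection from $\mathcal{R}_{\mathcal{L}'}^{n-1}$, where the induction hypothesis closes the argument. An alternative would be to observe that, as a set of clauses, $\mathcal{R}_{\mathcal{L}}^n$ coincides with the maximal contradiction generated by $\mathcal{L}$ in Definition \ref{def1} and argue directly by a sign-counting / pigeonhole argument on the $n$ rows; I would nonetheless prefer the inductive route, since it tracks the recursive definition and reuses the bookkeeping needed for the non-redundancy results in Theorem \ref{theorem2}.
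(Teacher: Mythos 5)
Your proposal is correct and follows essentially the same route as the paper: induction on $n$ mirroring the recursive definition, with the inductive step splitting on whether $l_n$ occurs among the left-block choices and $\neg l_n$ among the right-block choices, reducing the remaining cases to the induction hypothesis. Your three-case analysis with the explicit exhaustiveness argument is in fact a cleaner rendering of the paper's (somewhat typo-ridden, four-case) enumeration of the same dichotomy.
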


\begin{proof}
	We use mathematical induction on the number of literals in $\mathcal{L}$.
	
	\textbf{Base Case}: $n = 1$. 
	
	When $n = 1$, the generation literal set $\mathcal{L}_1$ contains exactly 1 literal, denoted $\mathcal{L}_1 = \{x\}$, then 
	
	\begin{center}
		$\mathcal{R}_{\mathcal{L}_1}^1$ = 
		$\begin{bmatrix}
			x & \neg x
		\end{bmatrix}$
		.
	\end{center}
	
	Obviously, $\mathcal{R}_{\mathcal{L}_1}^1$ is a standard contradiction with simplest structure. Thus, the theorem holds when $n = 1$.
	
	\textbf{Inductive Hypothesis}: Assume the theorem holds for any generation literal set $\mathcal{L}_k$ with $n = k$ literals (where $k \geq 1$). That is, for any $\mathcal{L}_k$ with $|\mathcal{L}_k| = k$, the rectangular standard contradiction $\mathcal{R}_{\mathcal{L}_k}^k$ generated by $\mathcal{L}_k$ is a standard contradiction, denoted as
	
	\begin{center}
		$\mathcal{R}_{\mathcal{L}_k}^k = \wedge_{i=1}^{2^k}C_i = $
		$\begin{bmatrix}
			x_{11} & x_{12} & \cdots & x_{1i} & \cdots & x_{12^k} \\
			x_{21} & x_{22} & \cdots & x_{2i} & \cdots & x_{22^k} \\
			\vdots & \vdots & \vdots & \vdots & \vdots & \vdots \\
			x_{k1} & x_{k2} & \cdots & x_{ki} & \cdots & x_{k2^k}
		\end{bmatrix}$
		.
	\end{center}
	
	There are $k$ rows and $2^k$ columns in $\mathcal{R}_{\mathcal{L}_k}^k$, and any column $C_i = \{x_{1i}, x_{2i}, \cdots, x_{ki}\}$ represents a clause. Thus, for any $(x^{[1]}, x^{[2]}, \cdots, x^{[2^k]}) \in \prod_{i=1}^{2^k}C_i$, there exists one complementary pair among $\{x^{[1]}, x^{[2]}, \cdots, x^{[2^k]}\}$. ($x^{[i]} \in C_i$, $i = 1, \cdots, 2^k$)
	
	\textbf{Inductive Step}: We now prove that the theorem holds for $n = k + 1$.
	
	Let $\mathcal{L}_{k+1}$ be a generation literal set with $n = k + 1$ literals. We can decompose $\mathcal{L}_{k + 1}$ as $\mathcal{L}_k \cup {x}$, where $\mathcal{L}_k$ is a subset of $\mathcal{L}_{k + 1}$ with $|\mathcal{L}_k| = k$ literals, and $x \notin \mathcal{L}_k$ (the ``new'' literal added to $\mathcal{L}_k$). Then the rectangular standard contradiction $\mathcal{R}_{\mathcal{L}_{k+1}}^{k+1}$ generated by $\mathcal{L}_{k+1}$ is denoted as follows:
	
	\begin{center}
		$\mathcal{R}_{\mathcal{L}_{k+1}}^{k+1}$ = 
		$\begin{array}{c}
			\begin{bmatrix}
				x_{11} & x_{12} & \cdots & x_{12^k} & x_{11} & x_{12} & \cdots & x_{12^k} \\
				x_{21} & x_{22} & \cdots & x_{22^k} & x_{21} & x_{22} & \cdots & x_{22^k} \\
				\vdots & \vdots & \vdots & \vdots & \vdots & \vdots & \vdots & \vdots \\
				x_{k1} & x_{k2} & \cdots & x_{k2^k} & x_{k1} & x_{k2} & \cdots & x_{k2^k} \\
				x & x & \cdots & x & \neg x & \neg x & \cdots & \neg x
			\end{bmatrix} \\
			\begin{array}{@{}c@{\hspace{1em}}c@{}}
				\underbrace{\hspace{7em}}_{2^{k}} & \underbrace{\hspace{7em}}_{2^{k}}
			\end{array}
		\end{array}$
		= 
		$\begin{array}{c}
			\begin{bmatrix}
				\hline
				\multicolumn{4}{|c|}{\mathcal{R}_{\mathcal{L}_k}^k} & \multicolumn{4}{|c|}{\mathcal{R}_{\mathcal{L}_k}^k} \\
				\hline
				x & x & \cdots & x & \neg x & \neg x & \cdots & \neg x
			\end{bmatrix} \\
			\begin{array}{@{}c@{\hspace{1em}}c@{}}
				\underbrace{\hspace{4em}}_{2^{k}} & \underbrace{\hspace{6em}}_{2^{k}}
			\end{array}
		\end{array}$
		.
	\end{center}
	
	for any $(x^{[1]}, x^{[2]}, \cdots, x^{[2^k]}, y^{[1]}, y^{[2]}, \cdots, y^{[2^k]}) \in \prod_{i=1}^{2^k}(C_i \cup \{x\}) \times \prod_{i=1}^{2^k}(C_i \cup \{\neg x\})$, four cases may arise, which we discuss one by one:
	
	a) If there no $x \in \{x^{[1]}, x^{[2]}, \cdots, x^{[2^k]}\}$, and exits $\neg x \in \{y^{[1]}, y^{[2]}, \cdots, y^{[2^k]}\}$, then there exists at least one complementary pair among $\{x^{[1]}, x^{[2]}, \cdots, x^{[2^k]}\}$. Thus, there exists one complementary pair among $(x^{[1]}, x^{[2]}, \cdots, x^{[2^k]}, y^{[1]}, y^{[2]}, \cdots, y^{[2^k]})$.
	
	b) If there no $\neg x \in \{y^{[1]}, y^{[2]}, \cdots, y^{[2^k]}\}$, and exists $x \in \{x^{[1]}, x^{[2]}, \cdots, x^{[2^k]}\}$, then there exists at least one complementary pair among $\{y^{[1]}, y^{[2]}, \cdots, y^{[2^k]}\}$. Thus, there exists one complementary pair among $(x^{[1]}, x^{[2]}, \cdots, x^{[2^k]}, y^{[1]}, y^{[2]}, \cdots, y^{[2^k]})$.
	
	c) If there no $x \in \{x^{[1]}, x^{[2]}, \cdots, x^{[2^k]}\}$, and exists $\neg x \in \{y^{[1]}, y^{[2]}, \cdots, y^{[2^k]}\}$, then there exists one complementary pair ($x, \neg x$) among $(x^{[1]}, x^{[2]}, \cdots, x^{[2^k]}, y^{[1]}, y^{[2]}, \cdots, y^{[2^k]})$. 
	
	d) If there no $x \in \{x^{[1]}, x^{[2]}, \cdots, x^{[2^k]}\}$, and exists $\neg x \in \{y^{[1]}, y^{[2]}, \cdots, y^{[2^k]}\}$, then there exists obviously one complementary pair among $(x^{[1]}, x^{[2]}, \cdots, x^{[2^k]}, y^{[1]}, y^{[2]}, \cdots, y^{[2^k]})$. 
	
	 In summary, for any $(x^{[1]}, x^{[2]}, \cdots, x^{[2^k]}, y^{[1]}, y^{[2]}, \cdots, y^{[2^k]}) \in \prod_{i=1}^{2^k}(C_i \cup \{x\}) \times \prod_{i=1}^{2^k}(C_i \cup \{\neg x\})$, there exists one complementary pair within it; that is, $\mathcal{R}_{\mathcal{L}_{k+1}}^{k+1}$ is a standard contradiction. Therefore, the theorem holds when $n = k + 1$.
	 
	 By the principle of mathematical induction, this theorem is thus proven.
\end{proof}

Each column of $\mathcal{R}_\mathcal{L}^n$ is a clause, $\mathcal{R}_\mathcal{L}^n$ is also a CNF formula. Furthermore, according to Theorem \ref{theorem1}, it follows that $\mathcal{R}_\mathcal{L}^n$ is UNSAT. The proof of Theorem \ref{theorem1} is of crucial importance, as it establishes the status of Rectangular Standard Contradiction as a logically necessarily false (UNSAT) clause set. Taking this as a starting point, we will explore its deeper structural properties in the next section, and these properties directly lead to the automated generation of theorems.

\section{Theorem Generation Based on Rectangular Standard Contradiction}\label{sec4}

In the previous section, we proved that Rectangular Standard Contradiction is a standard contradiction (UNSAT). This section will reveal its more profound "non-redundant" or "minimally unsatisfiable" property, and based on this, construct a complete theory of Automated Theorem Generation (ATG). This theory will clearly demonstrate how to systematically "extract" new, logically valid theorems from an unsatisfiable structure.

\begin{definition}\label{def3}
	A rectangular standard contradiction is referred to as a \textbf{full rectangular standard contradiction}, meaning that each column in the rectangle is filled with literals.
\end{definition}

Now, we introduce a key theorem that serves as a link between the preceding and subsequent content in our theoretical system, which reveals the non-redundancy of Rectangular Standard Contradiction.

\begin{theorem}\label{theorem2}
	There are no redundant clauses in a full rectangular standard contradiction (that is, the remaining clause set after removing any one clause from the full rectangular standard contradiction is satisfiable, i.e., \emph{SAT}).
\end{theorem}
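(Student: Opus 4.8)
The plan is to prove the statement directly and constructively: for every clause $C_j$ of a full rectangular standard contradiction $\mathcal{R}_\mathcal{L}^n$ I will exhibit an explicit model of $\mathcal{R}_\mathcal{L}^n \setminus \{C_j\}$. The argument rests on one structural fact that is already available from Definition \ref{def1}: as a clause set, $\mathcal{R}_\mathcal{L}^n$ is the maximal contradiction generated by $\mathcal{L}=\{l_1,\dots,l_n\}$, i.e. it is exactly the set of all $2^n$ clauses $C_\epsilon=\bigvee_{i=1}^n p_i^{\epsilon_i}$ obtained by choosing, independently for each $i$, the literal $p_i$ or $\neg p_i$ (writing $p_i^{+}=p_i$, $p_i^{-}=\neg p_i$, and $\epsilon=(\epsilon_1,\dots,\epsilon_n)\in\{+,-\}^n$). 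By the Remark the predicate symbols $p_1,\dots,p_n$ are pairwise distinct, so these $2^n$ clauses are pairwise distinct and ``full'' means precisely that all of them are present; this also makes ``removing any one clause'' unambiguous.

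First I would record the key one-to-one correspondence between clauses and their falsifying assignments. In propositional logic, for a sign vector $\epsilon$ let $v_\epsilon$ be the truth assignment determined by $v_\epsilon(p_i)=\textbf{false}$ iff $\epsilon_i=+$, so that $v_\epsilon$ makes every literal $p_i^{\epsilon_i}$ false and hence falsifies $C_\epsilon$. Conversely, if $\epsilon'\neq\epsilon$ then $v_{\epsilon'}$ satisfies $C_\epsilon$, because some coordinate $i$ has $\epsilon'_i\neq\epsilon_i$, whence the literal $p_i^{\epsilon_i}$ is true under $v_{\epsilon'}$. Thus each of the $2^n$ clauses of $\mathcal{R}_\mathcal{L}^n$ is falsified by exactly one of the $2^n$ assignments, and distinct clauses are falsified by distinct assignments (which, as a byproduct, re-proves Theorem \ref{theorem1}). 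The theorem then follows at once: deleting $C_j=C_{\epsilon^{(j)}}$, the assignment $v_{\epsilon^{(j)}}$ satisfies every remaining clause $C_\epsilon$ with $\epsilon\neq\epsilon^{(j)}$, so $\mathcal{R}_\mathcal{L}^n\setminus\{C_j\}$ is \emph{SAT}; and deleting several clauses is immediate afterward, since every subset of a satisfiable clause set is satisfiable.

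For the first-order case I would reduce to the propositional one. Since $p_1,\dots,p_n$ are distinct predicate symbols, I can fix any non-empty domain and define an interpretation $\mathcal{I}_{\epsilon^{(j)}}$ in which $P_i$ denotes the total relation when $\epsilon^{(j)}_i=-$ and the empty relation when $\epsilon^{(j)}_i=+$. Then the truth value of each atom $p_i(\cdots)$ is independent of the assignment to the (possibly shared) clause variables and equals $v_{\epsilon^{(j)}}(p_i)$, so the propositional computation of the previous paragraph carries over verbatim to every ground instance, yielding a genuine first-order model of $\mathcal{R}_\mathcal{L}^n\setminus\{C_j\}$.

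I expect the only real subtlety to be this first-order bookkeeping — ensuring the shared clause variables cause no trouble — which is exactly where the ``no repeated predicate symbols'' hypothesis is needed: with distinct predicates the atoms decouple and the problem collapses to the propositional counting argument. As an alternative that stays within the paper's inductive style, one can induct on $n$ using the recursive split of Definition \ref{def2}: setting $l_n$ to false turns the right half (clauses with $\neg l_n$ appended) into satisfied clauses and turns the left half with $C_j$ removed into $\mathcal{R}_{\mathcal{L}'}^{n-1}$ with one clause deleted, which is \emph{SAT} by the inductive hypothesis (and symmetrically when $C_j$ lies in the right half); the base case $n=1$ is the two-clause set $\{(l_1),(\neg l_1)\}$, from which removing either clause leaves a single unit clause, trivially satisfiable.
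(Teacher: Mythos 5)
Your proof is correct, but your primary argument takes a genuinely different route from the paper. The paper proves Theorem \ref{theorem2} by induction on $n$, mirroring the recursive structure of Definition \ref{def2}: it removes (w.l.o.g.) the clause $C_1\vee x$, invokes the inductive hypothesis to get a satisfying instance for the remaining left-half clauses, and extends it with $\neg x$ to satisfy the entire right half. Your main argument is instead direct and non-inductive: you identify the full rectangular standard contradiction with the maximal contradiction of Definition \ref{def1} (all $2^n$ sign patterns over $n$ distinct atoms, pairwise distinct thanks to the no-repeated-predicate rule), observe that each clause $C_\epsilon$ is falsified by exactly one assignment $v_\epsilon$ and each assignment falsifies exactly one clause, and conclude that after deleting $C_j$ its unique falsifying assignment is an explicit model of the rest. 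This bijective argument is shorter, yields Theorem \ref{theorem1} and Corollary \ref{corollary1} as immediate byproducts, and produces an explicit witness rather than one assembled recursively; your explicit first-order reduction (interpreting distinct predicates as total or empty relations so that all ground instances decouple) is also more careful than the paper, whose proof handles satisfiability essentially at the propositional level via ``satisfying instances.'' What the paper's induction buys in exchange is stylistic uniformity with the proof of Theorem \ref{theorem1} and a proof that leans only on the recursive definition rather than on the explicit enumeration of all $2^n$ clauses; your closing sketch of the inductive alternative (set $l_n$ false to satisfy the right half, apply the hypothesis to the left half minus the removed clause) is essentially the paper's own argument, so you have in effect covered both routes.
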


\begin{proof}
	Let $\mathcal{R}_\mathcal{L}^n$ is a full rectangular standard contradiction where $\mathcal{L}$ is a generation literal set containing $n$ distinct literals.
	We still use mathematical induction on the number of literals in $\mathcal{L}$.
	
	\textbf{Base Case}: $n = 1$
	
	When $n = 1$, the generation literal set $\mathcal{L}_1$ contains exactly 1 literal, denoted as $\mathcal{L}_1 = \{x\}$, then
	
	\begin{center}
		$\mathcal{R}_{\mathcal{L}_1}^1$ = 
		$\begin{bmatrix}
			x & \neg x
		\end{bmatrix}$
		.
	\end{center}
	
	It is obviously that there are no redundant clauses in $\mathcal{R}_{\mathcal{L}_1}^1$, Plus, the theorem holds when $n = 1$.
	
	\textbf{Inductive Hypothesis}: Assume the theorem holds for any generation literal set $\mathcal{L}_k$ with $n = k$ literals (where $k \geq 1$). That is, for any $\mathcal{L}_k$ with $|\mathcal{L}_k| = k$, the rectangular standard contradiction $\mathcal{R}_{\mathcal{L}_k}^k$ generated by $\mathcal{L}_k$ is a standard contradiction, denoted as
	
	\begin{center}
		$\mathcal{R}_{\mathcal{L}_k}^k = \wedge_{i=1}^{2^k}C_i = $
		$\begin{bmatrix}
			x_{11} & x_{12} & \cdots & x_{1i} & \cdots & x_{12^k} \\
			x_{21} & x_{22} & \cdots & x_{2i} & \cdots & x_{22^k} \\
			\vdots & \vdots & \vdots & \vdots & \vdots & \vdots \\
			x_{k1} & x_{k2} & \cdots & x_{ki} & \cdots & x_{k2^k}
		\end{bmatrix}$
		.
	\end{center}
	
	There are $k$ rows and $2^k$ columns in $\mathcal{R}_{\mathcal{L}_k}^k$, and any column $C_i = \{x_{1i}, x_{2i}, \cdots, x_{ki}\}$ is a clause. After arbitrarily extracting one clause from $\mathcal{R}_{\mathcal{L}_k}^k$, the remaining clause set is satisfiable.
	
	\textbf{Inductive Step}: We now prove that the theorem holds for $n = k + 1$.
	
	Let $\mathcal{L}_{k+1}$ be a generation literal set with $n = k + 1$ literals. We can decompose $\mathcal{L}_{k + 1}$ as $\mathcal{L}_k \cup {x}$, where $\mathcal{L}_k$ is a subset of $\mathcal{L}_{k + 1}$ with $|\mathcal{L}_k| = k$ literals, and $x \notin \mathcal{L}_k$ (the ``new'' literal added to $\mathcal{L}_k$). Then the rectangular standard contradiction $\mathcal{R}_{\mathcal{L}_{k+1}}^{k+1}$ generated by $\mathcal{L}_{k+1}$ is denoted as follows:
	
	\begin{center}
		$\mathcal{R}_{\mathcal{L}_{k+1}}^{k+1}$ = 
		$\begin{array}{c}
			\begin{bmatrix}
				x_{11} & x_{12} & \cdots & x_{12^k} & x_{11} & x_{12} & \cdots & x_{12^k} \\
				x_{21} & x_{22} & \cdots & x_{22^k} & x_{21} & x_{22} & \cdots & x_{22^k} \\
				\vdots & \vdots & \vdots & \vdots & \vdots & \vdots & \vdots & \vdots \\
				x_{k1} & x_{k2} & \cdots & x_{k2^k} & x_{k1} & x_{k2} & \cdots & x_{k2^k} \\
				x & x & \cdots & x & \neg x & \neg x & \cdots & \neg x
			\end{bmatrix} \\
			\begin{array}{@{}c@{\hspace{1em}}c@{}}
				\underbrace{\hspace{7em}}_{2^{k}} & \underbrace{\hspace{7em}}_{2^{k}}
			\end{array}
		\end{array}$
		= 
		$\begin{array}{c}
			\begin{bmatrix}
				\hline
				\multicolumn{4}{|c|}{\mathcal{R}_{\mathcal{L}_k}^k} & \multicolumn{4}{|c|}{\mathcal{R}_{\mathcal{L}_k}^k} \\
				\hline
				x & x & \cdots & x & \neg x & \neg x & \cdots & \neg x
			\end{bmatrix} \\
			\begin{array}{@{}c@{\hspace{1em}}c@{}}
				\underbrace{\hspace{4em}}_{2^{k}} & \underbrace{\hspace{6em}}_{2^{k}}
			\end{array}
		\end{array}$
		.
	\end{center}
	
	Notably, $\mathcal{R}_{\mathcal{L}_{k+1}}^{k+1}$ is also a clause set, i.e., $\mathcal{R}_{\mathcal{L}_{k+1}}^{k+1} = \{\wedge_{i=1}^{2^k}(C_i \vee x)\} \wedge \{\wedge_{i=1}^{2^k}(C_i \vee \neg x)\}$.
	
	Arbitrarily extract a clause; for the sake of notational convenience, w.l.o.g, extract the first clause $C_1 \vee x$ from $\mathcal{R}_{\mathcal{L}_{k+1}}^{k+1}$ (since all clauses in the clause set $\mathcal{R}_{\mathcal{L}_{k+1}}^{k+1}$ are of equal status), denoted as follows:
	
	\begin{center}
		$\mathcal{R}_{\mathcal{L}_{k+1}}^{k+1} \setminus \{C_1 \vee x\}$ = 
			$\begin{array}{c}
			\begin{bmatrix}
				\colorbox{yellow!30}{$x_{12}$} & \colorbox{yellow!30}{$\cdots$} & \colorbox{yellow!30}{$x_{12^k}$} & x_{11} & x_{12} & \cdots & x_{12^k} \\
				\colorbox{yellow!30}{$x_{22}$} & \colorbox{yellow!30}{$\cdots$} & \colorbox{yellow!30}{$x_{22^k}$} & x_{21} & x_{22} & \cdots & x_{22^k} \\
				\colorbox{yellow!30}{$\vdots$} & \colorbox{yellow!30}{$\vdots$} & \colorbox{yellow!30}{$\vdots$} & \vdots & \vdots & \vdots & \vdots \\
				\colorbox{yellow!30}{$x_{k2}$} & \colorbox{yellow!30}{$\cdots$} & \colorbox{yellow!30}{$x_{k2^k}$} & x_{k1} & x_{k2} & \cdots & x_{k2^k} \\
				x & \cdots & x & \neg x & \neg x & \cdots & \neg x
			\end{bmatrix} \\
			\begin{array}{@{}c@{\hspace{1em}}c@{}}
				\underbrace{\hspace{6.5em}}_{2^{k}-1} & \underbrace{\hspace{7.5em}}_{2^{k}}
			\end{array}
		\end{array}$
		.
	\end{center}
	
	By the inductive hypothesis, the clause set $\wedge_{i=2}^{2^k}C_i$ composed of all literals marked in yellow is SAT (where $C_i = \{x_{1i}, x_{2i}, \cdots, x_{ki}\}$), meaning there exists a set of satisfying instances $Y = (y_2, \cdots, y_{2^k})$ that makes $\wedge_{i=1}^{2^k}C_i$ satisfied. It is not difficult to deduce that this satisfying instances $Y$ renders the clause set $\wedge_{i=2}^{2^k}(C_i \vee x)$ SAT. Furthermore, By adding $2^k$ instances of $\neg x$ to $Y$ results in a new literal set $Y^\prime = (y_2, \cdots, y_{2^k}, \neg x, \neg x, \cdots, \neg x)$, which renders the clause set $\mathcal{R}_{\mathcal{L}_{k+1}}^{k+1} \setminus \{C_1 \vee x\}$ SAT.
	
	It should also be noted that there are no complementary pairs in $Y = (y_2, \cdots, y_{2^k})$, so there are no complementary pairs in $Y^\prime = (y_2, \cdots, y_{2^k}, \neg x, \neg x, \cdots, \neg x)$ either. Moreover, the $2^{k+1} - 1 = 2^k - 1 + 2^k$ literals in $Y^\prime$ corresponding exactly to the literals in $C_2 \vee x, \cdots, C_{2^k} \vee x, C_1 \vee \neg x, \cdots, C_{2^k} \vee \neg x$, which means $Y\prime$ is a satisfying instance of the clause set $\mathcal{R}_{\mathcal{L}_{k+1}}^{k+1} \setminus \{C_1 \vee x\} = \{C_2 \vee x, \cdots, C_{2^k} \vee x, C_1 \vee \neg x, \cdots, C_{2^k} \vee \neg x\}$.
	
	Therefore, the theorem holds when $n = k + 1$.
	
	By the principle of mathematical induction, this theorem is thus proven.
\end{proof}

Theorem \ref{theorem2} reveals a profound dual property: a full rectangular standard contradiction as a whole is unsatisfiable, yet any proper subset of the clause set that constitutes the full rectangular standard contradiction is satisfiable. This property can be naturally extended to scenarios where multiple clauses are removed.

\begin{corollary}\label{corollary1}
	Let $\mathcal{R}_{\mathcal{L}^n}^n$ is a full rectangular standard contradiction. After removing any $k$ $(k \leq 2^n)$ clauses from $\mathcal{R}_{\mathcal{L}^n}^n$, the remaining clause set is satisfiable, i.e., \emph{SAT}.
\end{corollary}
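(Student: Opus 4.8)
The plan is to derive Corollary~\ref{corollary1} directly from Theorem~\ref{theorem2} together with the elementary \emph{monotonicity} of satisfiability: if a clause set $\mathcal{T}$ is contained in a clause set $\mathcal{S}$ and $\mathcal{S}$ is \emph{SAT}, then any interpretation satisfying $\mathcal{S}$ also satisfies $\mathcal{T}$, so $\mathcal{T}$ is \emph{SAT}. No fresh induction is needed, since the substantive work has already been carried out in Theorem~\ref{theorem2}.

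First I would dispose of the trivial boundary case $k = 2^n$: removing all $2^n$ clauses of $\mathcal{R}_{\mathcal{L}}^n$ leaves the empty clause set, which is vacuously satisfiable. So from now on assume $1 \le k < 2^n$. Then let $D_1, D_2, \ldots, D_k$ denote the $k$ clauses removed from $\mathcal{R}_{\mathcal{L}}^n$. By Theorem~\ref{theorem2}, the clause set $\mathcal{R}_{\mathcal{L}}^n \setminus \{D_1\}$ obtained by deleting the single clause $D_1$ is satisfiable; fix a satisfying interpretation $I$ for it. Since
\[
	\mathcal{R}_{\mathcal{L}}^n \setminus \{D_1, D_2, \ldots, D_k\} \;\subseteq\; \mathcal{R}_{\mathcal{L}}^n \setminus \{D_1\},
\]
the interpretation $I$ satisfies every clause of the smaller set as well, so $\mathcal{R}_{\mathcal{L}}^n \setminus \{D_1, \ldots, D_k\}$ is \emph{SAT}, which is exactly the claim.

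The only points requiring any care are the edge cases: the case $k = 2^n$ handled above, and the implicit hypothesis $k \ge 1$ (for $k = 0$ the statement is false, as the full rectangular standard contradiction is \emph{UNSAT} by Theorem~\ref{theorem1}). Beyond that there is essentially no obstacle — the corollary is an immediate consequence of Theorem~\ref{theorem2} and the downward closure of satisfiability under clause deletion, and could be stated in a single line.

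If instead a self-contained inductive argument on $k$ were preferred, it would go through just as smoothly: Theorem~\ref{theorem2} furnishes the base step $k = 1$, and for the inductive step one removes the $k$-th clause last and applies the same subset observation to the set obtained after the first $k-1$ removals. I expect the proof in the paper to take one of these two equivalent routes.
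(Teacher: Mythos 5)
Your proposal is correct and follows essentially the same route as the paper: the paper likewise reduces the removal of $k$ clauses to the single-clause case of Theorem~\ref{theorem2} (phrased there as the absence of complementary pairs among literals chosen from the remaining clauses, i.e., the downward closure of satisfiability under clause deletion) and handles $k = 2^n$ by noting the empty clause set is satisfiable. Your explicit remark that $k \ge 1$ is needed is a minor precision the paper glosses over, but the substance is identical.
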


\begin{proof}
	As can be seen from Theorem \ref{theorem2}, after any one clause from a full rectangular standard contradiction $\mathcal{R}_{\mathcal{L}^n}^n$, remaining clause set is SAT (that is, there will be no complementary pairs in the literal tuple formed by arbitrarily selecting $2^n - 1$ literals from the remaining $2^n - 1$ clauses).
	
	Then, after removing any $k$ clauses from $\mathcal{R}_{\mathcal{L}^n}^n$, there also is no complementary pairs in the literal tuple formed by arbitrarily selecting $2^n - k$ literals from the remaining $2^n - k$ clauses. Therefore, after removing any $k$ clauses from a rectangular standard contradiction, the remaining clause set is SAT.
	
	When $k = 2^n$, the remaining clause set is an empty set, and the empty set generally considered SAT.
\end{proof}

Theorem \ref{theorem1}, Theorem \ref{theorem2}, and Corollary \ref{corollary1} together form the logical foundation for theorem generation, and ensure that the theorems generated are meaningful - specifically, that such theorems are not derived from unsatisfiable clause sets.

\begin{theorem}\label{theorem3}
	For any full rectangular standard contradiction $\mathcal{R}_{\mathcal{L}}^n$, let $H = \{H_1, H_2, \cdots, H_k\}$ $(k \leq 2^n)$ be an arbitrary subset of $\mathcal{R}_{\mathcal{L}}^n$, and let $A = R \setminus H$. Then $A \vdash \neg H$ is theorem, where $A$ is the premise and $\neg H$ is the hypothesis.
\end{theorem}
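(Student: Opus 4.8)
The plan is to reduce the claim to the unsatisfiability result already established in Theorem \ref{theorem1}, together with the soundness and completeness of clausal deduction. First I would make the semantic content precise: the clause set $H = \{H_1, \ldots, H_k\}$ is read as the conjunction $H_1 \wedge \cdots \wedge H_k$, so its negation is $\neg H \equiv \neg H_1 \vee \cdots \vee \neg H_k$, while $A = \mathcal{R}_{\mathcal{L}}^n \setminus H$ is read as $\bigwedge_{C \in A} C$. Under this reading, ``$A \vdash \neg H$ is a theorem'' asserts that $\neg H$ is derivable from the premise $A$, which by the soundness and completeness of the underlying resolution-based calculus is equivalent to the semantic entailment $A \models \neg H$.

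Next I would observe the key decomposition $A \cup H = \mathcal{R}_{\mathcal{L}}^n$: every clause of the full rectangular standard contradiction lies either in the chosen subset $H$ or in its complement $A$, and $A$ and $H$ are disjoint. By Theorem \ref{theorem1}, $\mathcal{R}_{\mathcal{L}}^n$ is a standard contradiction, hence unsatisfiable; therefore $A \wedge H$ is unsatisfiable. Consequently no interpretation can simultaneously satisfy every clause of $A$ and every clause of $H$, so every model of $A$ must falsify at least one $H_j$, i.e.\ satisfy some $\neg H_j$, i.e.\ satisfy $\neg H$. Thus $A \models \neg H$, and by completeness $A \vdash \neg H$; that is, $A \vdash \neg H$ is a valid theorem with premise $A$ and conclusion $\neg H$.

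To argue that the generated theorem is genuinely informative rather than vacuously true, I would then invoke Corollary \ref{corollary1}: since $A$ arises from $\mathcal{R}_{\mathcal{L}}^n$ by deleting the $k \le 2^n$ clauses of $H$, the premise set $A$ is itself satisfiable, so $A$ is not a contradiction and the entailment $A \models \neg H$ is not a degenerate ``ex falso'' inference. In the extreme case $k = 2^n$ the premise $A$ is empty and the statement specializes to the validity of $\neg \mathcal{R}_{\mathcal{L}}^n$, which is again exactly Theorem \ref{theorem1}.

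As for difficulty, there is essentially no hard computation: the entire content is the one-line set identity $A \cup H = \mathcal{R}_{\mathcal{L}}^n$ combined with Theorem \ref{theorem1}. The only point requiring care is bookkeeping of conventions --- stating precisely how a finite clause set is interpreted as a conjunction, how $\neg H$ is formed, and correctly appealing to soundness and completeness so that the semantic fact $A \models \neg H$ is upgraded to the syntactic statement $A \vdash \neg H$. If one preferred to avoid citing completeness, an alternative route would be to exhibit an explicit resolution refutation of $A \cup H$ and transform it, by the standard deduction-theorem argument for clause logic, into a derivation of $\neg H$ from $A$; but invoking the metatheorems is cleaner and keeps the proof short.
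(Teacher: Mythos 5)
Your proposal is correct and follows essentially the same route as the paper: the paper likewise argues that $A \wedge H$ is the full rectangular standard contradiction and hence unsatisfiable by Theorem \ref{theorem1}, rewrites this as $A \wedge \neg(\neg H) \equiv \text{UNSAT}$, and appeals to the deduction theorem (refutation completeness) to conclude $A \vdash \neg H$. Your additional appeal to Corollary \ref{corollary1} to show the premise set $A$ is satisfiable matches the paper's own remark that this is what makes the generated theorems non-degenerate.
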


\begin{remark}
	$A \vdash \neg H$ is referred to as a theorem generated by $\mathcal{R}_{\mathcal{L}}^n$.
\end{remark}

An intuitive interpretation of Theorem 3 is as follows: Since the premise $A$ combined with the conclusion $H$ forms a full rectangular standard contradiction ($A \wedge H  \equiv \text{UNSAT}$), this is equivalent to $A \wedge (\neg \neg H) \equiv \text{UNSAT}$, which can also be expressed as $A \wedge \neg (\neg H) \equiv \text{UNSAT}$. According to the deduction theorem, this indicates that starting from the premise $A$, the hypothesis $\neg H$ can necessarily be derived.

Furthermore, we find that all theorems generated from the same rectangular contradictory are logically equivalent, which greatly simplifies the selection and representation of theorems.

\begin{theorem}\label{theorem4}
	Let $\mathcal{L}$ is a generation literal set with $n$ literals, and $\mathcal{R}_{\mathcal{L}}^n$ be a full rectangular standard contradiction generated by $\mathcal{L}$. Then the theorems generated by $\mathcal{R}_{\mathcal{L}}^n$ are mutually equivalent.
\end{theorem}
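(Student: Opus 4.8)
The plan is to reduce every theorem generated by $\mathcal{R}_{\mathcal{L}}^n$ to one and the same formula --- namely the negation of the conjunction of \emph{all} clauses of $\mathcal{R}_{\mathcal{L}}^n$ --- so that logical equivalence of any two such theorems becomes immediate. Concretely, I would fix two arbitrary theorems $T_1 = (A_1 \vdash \neg H_1)$ and $T_2 = (A_2 \vdash \neg H_2)$ produced via Theorem~\ref{theorem3}, where each $H_i$ is a subset of the clause set $\mathcal{R}_{\mathcal{L}}^n$ and $A_i = \mathcal{R}_{\mathcal{L}}^n \setminus H_i$. Following the deduction-theorem interpretation already given after Theorem~\ref{theorem3}, the content of $T_i$ is the implication $A_i \to \neg H_i$, which is logically equivalent to $\neg A_i \vee \neg H_i$, i.e.\ to $\neg(A_i \wedge H_i)$.

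The key observation is then that, regardless of $i$, the parts $A_i$ and $H_i$ together contain exactly the clauses $C_1, C_2, \ldots, C_{2^n}$ of $\mathcal{R}_{\mathcal{L}}^n$, each exactly once (this is precisely the meaning of $A_i = \mathcal{R}_{\mathcal{L}}^n \setminus H_i$). Since conjunction is commutative and associative, the CNF formula $A_i \wedge H_i$ is therefore logically equivalent to $\bigwedge_{j=1}^{2^n} C_j$, i.e.\ to $\mathcal{R}_{\mathcal{L}}^n$ itself, a formula that does not depend on the chosen partition. Hence $T_1 \equiv \neg(A_1 \wedge H_1) \equiv \neg \mathcal{R}_{\mathcal{L}}^n \equiv \neg(A_2 \wedge H_2) \equiv T_2$, which is the claim. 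As a sanity check consistent with Theorem~\ref{theorem3}: by Theorem~\ref{theorem1} the formula $\mathcal{R}_{\mathcal{L}}^n$ is unsatisfiable, so the common formula $\neg \mathcal{R}_{\mathcal{L}}^n$ is valid, confirming that every generated $T_i$ is a genuine theorem --- and, since all valid formulas are mutually equivalent, this already re-derives the conclusion from a second angle. In the first-order setting the argument goes through verbatim once ``equivalent'' and ``unsatisfiable'' are read with respect to the (universal closures of the) clause sets, because regrouping the clauses of $\mathcal{R}_{\mathcal{L}}^n$ into $A_i$ and $H_i$ changes neither the multiset of clauses nor, hence, the logical content.

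I do not expect a deep obstacle here; the substantive work has already been done in Theorems~\ref{theorem1}--\ref{theorem3}. The one point requiring care is pinning down the intended meaning of ``the theorems are mutually equivalent'': it must be equivalence of the associated implication formulas $A_i \to \neg H_i$, and \emph{not}, say, pairwise equivalence of the premise parts $A_1, A_2$ or of the conclusion parts $\neg H_1, \neg H_2$. The latter readings fail in general --- for instance $A_1$ and $A_2$ typically contain different numbers of clauses, and by the non-redundancy established in Theorem~\ref{theorem2} they are genuinely distinct, non-equivalent clause sets. Once the correct notion of equivalence is fixed, the proof is exactly the short computation sketched above.
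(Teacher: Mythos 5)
Your proposal is correct and follows essentially the same route as the paper: the paper's proof also hinges on the deduction-theorem reduction, showing that every generated entailment $A \vdash \neg H$ holds if and only if $\mathcal{R}_{\mathcal{L}}^n$ itself is unsatisfiable (since $A$ and $H$ together always reconstitute the full clause set), which is just your reduction of each theorem to the single fixed formula $\neg \mathcal{R}_{\mathcal{L}}^n$ phrased as a chain of equivalences. Your explicit remark on what ``mutually equivalent'' must mean (equivalence of the implications, not of the premise or conclusion parts) is a sound clarification consistent with the paper's intent.
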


\begin{proof}
	Let $C$ is an arbitrary clause in $\mathcal{R}_{\mathcal{L}}^n$, for $\forall r, k \leq 2^n$, let $\{E_1, E_2, \cdots, E_r\}$ and $D_1, D_2, \cdots, D_k$ be subsets of $\mathcal{R}_{\mathcal{L}}^n$. Then
	
	\begin{center}
		$\mathcal{R}_{\mathcal{L}}^n \setminus \{C\} \vdash \neg C$ iff $\mathcal{R}_{\mathcal{L}}^n$ is unsatisfiable iff $S \setminus \{E_1, E_2, \cdots, E_r\} \vdash \neg (E_1 \wedge E_2 \wedge \cdots \wedge E_r)$,
		
		$\mathcal{R}_{\mathcal{L}}^n \setminus \{C\} \vdash \neg C$ iff $\mathcal{R}_{\mathcal{L}}^n$ is unsatisfiable iff $S \setminus \{D_1, D_2, \cdots, D_k\} \vdash \neg (D_1 \wedge D_2 \wedge \cdots \wedge D_k)$,
		
		$\mathcal{R}_{\mathcal{L}}^n \setminus \{E_1, E_2, \cdots, E_r\} \vdash \neg \{E_1, E_2, \cdots, E_r\}$ iff $S \setminus \{D_1, D_2, \cdots, D_k\} \vdash \neg (D_1 \wedge D_2 \wedge \cdots \wedge D_k)$.
	\end{center}
	
	This theorem is thus proven.
\end{proof}

\begin{remark}\label{remark3}
	It follows from Theorem \ref{theorem4} that among all theorems generated by a full rectangular standard contradiction $\mathcal{R}_{\mathcal{L}}^n$, only one theorem needs to be selected for representation, i.e., $S \setminus \{C\} \vdash \neg C$.
\end{remark}

\section{Automated Theorem Generation Algorithm}\label{sec5}

The previous two sections have laid a solid theoretical foundation for Automated Theorem Generation (ATG). This section puts the theory into practice by designing and implementing an efficient automated algorithm. Taking any valid set of generating literals as input, this algorithm can automatically construct the corresponding Rectangular Standard Contradiction and output a new theorem consisting of premises and a conclusion. The core lies in an innovative ``template-based construction method'' we proposed, which is significantly superior to the naive construction method described in Section 3.2.

\subsection{Template-Based Construction Method}

A close observation of the structure of an $n$-level rectangular standard contradiction $\mathcal{R}_{\mathcal{P}_n}^n$ reveals that its inherent pattern is independent of specific literals and only related to the polarity of the literals (positive or negative). For instance, an $n$-level rectangular standard contradiction can be viewed as two ($n$-1)-level rectangular standard contradictions placed side by side, with a new row of literals and their negations appended below.

\begin{center}
	$\mathcal{R}_{\mathcal{L}_n}^n$ = 
	$\begin{array}{c}
		\begin{bmatrix}
			\hline
			\multicolumn{4}{|c|}{\mathcal{R}_{\mathcal{L}_{n-1}}^{n-1}} & \multicolumn{4}{|c|}{\mathcal{R}_{\mathcal{L}_{n-1}}^{n-1}} \\
			\hline
			x & x & \cdots & x & \neg x & \neg x & \cdots & \neg x
		\end{bmatrix} \\
		\begin{array}{@{}c@{\hspace{1em}}c@{}}
			\underbrace{\hspace{4em}}_{2^{n-1}} & \underbrace{\hspace{6em}}_{2^{n-1}}
		\end{array}
	\end{array}$
	.
\end{center}

We can abstract this polarity structure, using "!" to represent positive literals and "?" to represent negative literals, thereby obtaining a general 
n
-level Rectangular Standard Contradiction template. For any generating set containing 
n
literals, its Rectangular Standard Contradiction follows the same 
n
-level template.

The literals directly below $\mathcal{R}_{\mathcal{L}_{n-1}}^{n-1}$ on the left are $2^{n-1}$ generation literals $x$, and the literals directly below $\mathcal{R}_{\mathcal{L}_{n-1}}^{n-1}$ on the right are the opposite counterparts of these $2^{n-1}$ generation literals $x$, i.e., $\neg x$. We cab abstract this polarity structure, using ``!'' to represent positive literals and ``?'' to represent negative literals, thereby obtaining a general $n$-level rectangular standard contradiction template. For any set of generation literals containing $n$ literals, its rectangular standard contradiction follows the same $n$-level template. The structure of $\mathcal{R}_{\mathcal{L}_n}^n$ can be transformed as follows:

\begin{equation}\label{formula9}
		\begin{array}{c}
		\begin{bmatrix}
			! & ? & \cdots & ? & ! & ? & \cdots & ? \\
			! & ! & \cdots & ? & ! & ! & \cdots & ? \\
			\vdots & \vdots & \vdots & \vdots & \vdots & \vdots & \vdots & \vdots \\
			! & ! & \cdots & ? & ! & ! & \cdots & ? \\
			! & ! & \cdots & ! & ? & ? & \cdots & ?
		\end{bmatrix} \\
		\begin{array}{@{}c@{\hspace{1em}}c@{}}
			\underbrace{\hspace{3em}}_{2^{n-1}} & \underbrace{\hspace{3em}}_{2^{n-1}}
		\end{array}
	\end{array}
\end{equation}

The structure in Formula (\ref{formula9}) is called an \textbf{$\textbf{\emph{n}}$-level rectangular standard contradiction template}. That is, the template of the rectangular standard contradiction generated by generation literal set containing $n$ literals is the same $n$-level template.

\begin{example}
	The 3-level and 4-level rectangular standard contradiction templates are as follows.
	
	\vspace{0.5\baselineskip}
	
	3-level template: 
	$\begin{bmatrix}
		! & ? & ! & ? & ! & ? & ! & ?\\
		! & ! & ? & ? & ! & ! & ? & ?\\
		! & ! & ! & ! & ? & ? & ? & ?
	\end{bmatrix}$;
	
	\vspace{0.5\baselineskip}
	
	4-level template:
	$\begin{bmatrix}
		! & ? & ! & ? & ! & ? & ! & ? & ! & ? & ! & ? & ! & ? & ! & ?\\
		! & ! & ? & ? & ! & ! & ? & ? & ! & ! & ? & ? & ! & ! & ? & ?\\
		! & ! & ! & ! & ? & ? & ? & ? & ! & ! & ! & ! & ? & ? & ? & ?\\
		! & ! & ! & ! & ! & ! & ! & ! & ? & ? & ? & ? & ? & ? & ? & ?
	\end{bmatrix}$.
\end{example}

Compared with the naive construction method, the template-based method completely separates structure generation from content population. We first efficiently generate an abstract polarity template, and then "populate" the specific generating literal set into the template. This method avoids the complex management of the literals themselves during the construction process, making the algorithm logic clearer and the execution efficiency higher.

Algorithm \ref{algo1} is the pseudo-code for generating an $n$-level template. By means of an iterative approach, the template is constructed incrementally from the 1-level template up to the $n$-level template. Each step only process the current level, resulting in extremely high efficiency and clear logic.

\begin{algorithm}
	\caption{$n$-level Template Construction Method}\label{algo1}
	\begin{algorithmic}[1]
		\Require 
		A positive integer $n$ indicating the template level. 
		\Ensure An $n$-level full rectangular standard contradiction template (2D array).
		\If{$n \le 0$} 
		\State \Return \emph{empty list}
		\EndIf
		
		\Comment{1. Initialization: Start with the 1-level template}
		\State $\emph{current\_template} \gets [\ ``!'', ``?'']\ ]$
		
		\Comment{2. Iterative construction: Build up from 2-level to $n$-level}
		\For{$k \gets 2$ \textbf{to} $n$}
		\Comment{a. Duplicate and extend: Copy and concatenate each row of the previous level template}
		\State $\emph{extended\_rows} \gets$ \emph{empty list}
		\For{\textbf{each} \emph{row} $\in \emph{current\_template}$}
		\State $\emph{extended\_rows}\text{.append}(\emph{row} \parallel \emph{row})$
		\EndFor
		
		\Comment{b. Generate new row: Create a new row with $2^{k-1}$ ``!'' followed by $2^{k-1}$ ``?``}
		\State $\emph{half\_length} \gets 2^{k-1}$
		\State $\emph{new\_row} \gets (["!"] \times \emph{half\_length}) \parallel (["?"] \times \emph{half\_length})$
		
		\Comment{c. Merge: Combine the extended rows and the new row to form the current $k$-level template}
		\State $\emph{current\_template} \gets \emph{extended\_rows}$
		\State $\emph{current\_template}.\textsc{Append}(\emph{new\_row})$
		\EndFor
		
		\Comment{3. Return the final result}
		\State \Return $\emph{current\_template}$
	\end{algorithmic}
\end{algorithm}

\subsection{ATG Algorithm}

Combined with Algorithm \ref{algo1}, we finally propose a complete Automated Theorem Generation (ATG) algorithm (i.e., Algorithm \ref{algo2}). This algorithm encapsulates the entire process from theory to practice: it receives a generating literal set, constructs a Rectangular Standard Contradiction via the template, automatically divides the premises and conclusion according to Theorem \ref{theorem3} and Remark \ref{remark3}, and finally outputs a brand-new theorem. Eventually, based on the ATG algorithm, we have implemented the \textbf{Automated Theorem Generator - Rectangle}, whose first version has been released on GitHub (\url{https://github.com/lpy-2019/Automated-Theorem-Generator---Rectangle}). This tool features a clear GUI, and anyone with basis mathematical logic knowledge can quickly get started with in.

\begin{algorithm}
	\caption{Automated Theorem Generation (ATG)}\label{algo2}
	\begin{algorithmic}[1]
		\Require 
		A literal set $\mathcal{L}$.
		\Ensure 
		A theorem (clause set).
		
		\State $n \gets |\mathcal{L}|$ 
		\Comment{Get the number of literals in $\mathcal{L}$}
		\State \emph{num\_clauses} $\gets 2^n$  
		\Comment{Get the number of clauses in the rectangular standard contradiction}
		\State $\mathcal{R} \gets $ [ ]
		\Comment{Initialization of the rectangular standard contradiction (clause set form)}
		
		\Comment{Call Algorithm 1 to generate a $n$-level template (2D array form)}
		\State \emph{template} $\gets$ \textsc{Make\_template}($n$)
		
		\For{$i \gets 1$ \textbf{to} \emph{num\_clause}}
		\Comment{Constructed clauses column by column}
		\State \emph{col\_array} $\gets$ \emph{template}$[i]$
		\State \emph{clause} $\gets$ [ ]
		\Comment{Initialization of a clause}
		\For{$j \gets 1$ \textbf{to} $n$}
		\Comment{Fill clauses row by by row}
		\If{\emph{col\_array}[$j$] $\neq$ ``?''}
		\Comment{Fill in the corresponding positive literal}
		\State \emph{clause}.\textsc{Append}(\textsc{Generating\_literal}($\mathcal{R}[j]$))
		\Else
		\Comment{Fill in the corresponding negative literal}
		\State \emph{clause}.\textsc{Append}(\textsc{Negation}(\textsc{Generating\_literal}($\mathcal{R}[j]$)))
		\EndIf
		\EndFor
		\State $\mathcal{R}$.\textsc{Append}(\emph{clause})
		\EndFor
		
		\State \emph{hypothesis} $\gets$ \textsc{Negation}($\mathcal{R}[1]$)
		\Comment{Negate $\mathcal{R}$'s 1st clause as the theorem's hypothesis}
		\State \emph{premises} $\gets \{\mathcal{R}[2], \mathcal{R}[3], \cdots, \mathcal{R}[2^n]\}$
		\Comment{Remaining clauses are the theorem's premises}
		
		\State \emph{theorem} $\gets \{hypothesis, premises\}$
		\Comment{Get the theorem}
		
		\State \Return $\emph{theorem}$
	\end{algorithmic}
\end{algorithm}

\section{Conclusions and Future Work}\label{sec6}

This paper introduces a complete and rigorous theory for Automated Theorem Generation (ATG), and focus on theorem discovery by formalizing the ``Rectangular Standard Contradiction'', a novel, recursively defined logical structure. We have proven that this structure is a non-redundant unsatisfiable clause set, which provides a systematic method for generating new, valid theorems by partitioning the structure into premises and a conclusion.

The significance of this work lies in providing a complete theoretical framework, complemented by an efficient template-based algorithm, that empowers machines to transition from logical `verifiers' to `discoverers'. This paradigm shift opens new possibilities for machine creativity and advances fundamental research in logic and artificial intelligence, establishing a solid foundation for the automated discovery of new knowledge.

As our conclusions remain valid when logical literals are replaced with first-order logic formulas, our primary focus for future work will be extending the automated theorem generator to accept first-order closed formulas as input, thereby significantly broadening its application scenarios. Additionally, we aim to deeply integrate domain expertise - particularly from mathematics - with our automated theorem generator to establish a technical pathway capable of generating framework with practical semantic value. Finally, there remains considerable scope for meaningful theoretical research in automated theorem generation, which constitutes another important direction for our future endeavors.

\section*{Declarations}

\noindent
\textbf{Funding} This work has been supported by the Key Project of Sichuan Science and Technology Innovation and Entrepreneurship Seeding Program (Grant No. 2024JDRC0084).

\textbf(Competing interests) The authors declare no competing interests.


\bibliography{sn-bibliography}

\end{document}